\documentclass{article}
\usepackage[linesnumbered]{algorithm2e}
\usepackage{amsmath}
\usepackage{amsfonts}
\usepackage{amssymb,amsthm}
\usepackage{color}
\usepackage[margin=1in]{geometry}
\usepackage{hyperref}
\usepackage{graphicx}
\usepackage{verbatim}

\numberwithin{equation}{section}
\newtheorem{theorem}[equation]{Theorem}
\newtheorem{proposition}[equation]{Proposition}
\newtheorem{lemma}[equation]{Lemma}

\newtheorem{definition}[equation]{Definition}

 \newcommand{\m}[1]{\left|#1\right|}

 \author{Sheela Devadas\thanks{Math Dept., MIT, Cambridge MA 02139
 E-mail: {\tt sheelad@mit.edu}} \;and Ronitt Rubinfeld\thanks{CSAIL, MIT, Cambridge MA 02139 and
the Blavatnik School of Computer Science, Tel Aviv University.
E-mail: {\tt  ronitt@csail.mit.edu}.
Research supported by NSF grants CCF-1217423, CCF-1065125, CCF-1420692, and ISF grant 1536/14.}}
\title{A Self-Tester for Linear Functions over the Integers with an Elementary Proof of Correctness}
\date{June 22, 2015}

\parindent=0in
\parskip=.080in

\begin{document}

\maketitle

\begin{abstract}
We present simple, self-contained proofs of correctness for algorithms for 
linearity testing and program checking of linear functions on finite subsets of integers represented as $n$-bit numbers. 
In addition we explore a generalization of self-testing to homomorphisms on a multidimensional vector space.
 We show that our self-testing algorithm for the univariate case can be directly generalized to vector space domains.
The number of queries made by our algorithms is independent of domain size. 
\end{abstract}

\section{Introduction}

In this paper, we consider the problem of linearity testing,
as in the program checking, self-testing and property testing
frameworks of \cite{BK},\cite{BLR},\cite{RS96},\cite{GGR}. A function $f$ is 
{\em linear} if for any $x,y$ in the domain, $f(x)+f(y)=f(x+y)$. In the case where the domain and range are both the set of integers, 
it is easily shown (by induction) that 
the linear functions are exactly the functions
that  multiply by a constant: i.e., $f(x)=bx$ for some $b$. 
Given a program $P_b$ that computes multiplication by $b$,  
our self-tester should pass $P_b$ if it gives the correct answer 
for all inputs, and should fail $P_b$ 
with high probability if it is incorrect on a large enough
fraction of the $n$-bit integer inputs. 

We present efficient algorithms for linearity testing
and program checking of functions on integers
represented as $n$-bit numbers
as considered in \cite{BLR}. Our query complexities are of the same
order of magnitude, though the constants we achieve are not
as good as in \cite{BCHKS},\cite{KLX},\cite{Kthesis}. 
   However, our proofs are 
more elementary;   in contrast
to previous proofs, our proofs use little ``algebraic structure''.
We then show that our techniques can be extended to homomorphisms on a multidimensional vector space.

\paragraph{Previous Work }

Linearity testing was first considered in \cite{BLR}
and used to give algorithms for self-testing
and program checking programs that compute linear functions. 
Linearity testing for multivariate functions over various finite groups 
has been considered in several works, including \cite{BCHKS},\cite{KLX},\cite{BFL},\cite{BGLR},\cite{Trevisan},\cite{ST},\cite{ST2},\cite{HW},\cite{SW},\cite{BSVW},\cite{SudanTrevisan}. In \cite{BLR},\cite{RS92}, linearity testing for multivariate functions over finite subsets of infinite groups is considered. Testers for real-valued multilinear functions defined over finite domains have been studied in \cite{Magniez} 
. Linearity testing for functions over finite subsets of rational domains is considered in \cite{HW},\cite{RS92}. 

Several works have considered the problem of testing low-degree polynomials. Specifically relevant to this work are results on testing polynomials over rational domains, such as \cite{RS96}. 


\paragraph{Outline of Paper}

In Section 2, we give definitions of 
property testing, self-testing, and checking algorithms from \cite{BK},\cite{BLR},\cite{RS96},\cite{GGR},
and give
a general overview of the other 
definitions and techniques necessary for our proofs. 
In Section 3, we present self-testing and property testing algorithms for the univariate case and proofs 
that they catch errors in all
programs whose output function 
differs 
from any linear function 
in a significant fraction of locations.
In Section 4, 
we discuss how to extend the results we have for the univariate case to linear homomorphisms on a vector space.
Finally, in the Appendix we present an additional checking 
algorithm for the univariate case. 

\section{Preliminaries}

A {\em checking algorithm}, as defined in \cite{BK} for a function $f$, is an algorithm $C$ that gets
as input a specific input $x$ and program $P$. The goal
is to determine whether program $P$ is correct on input $x$ or
has a fault.   $C$ may make calls to $P$ on any input.
If $P=f$ for all inputs, the checking algorithm $C$ should PASS with
probability at least $2/3$; 
but if $P(x) \ne f(x)$ at the chosen input $x$, 
algorithm $C$ returns FAIL with probability $2/3$, where the probability is over the coin tosses of $C$, and not on any assumption on the input distribution.\footnote{Note
that by repeating $C$ at least $1/\epsilon$ times
and outputting the majority answer, error
probability $\leq \epsilon$ can be achieved.}
Note that if $P(x)=f(x)$, but there is a $y \neq x$ such that
$P(y) \neq f(y)$, then $P$ may either output PASS or FAIL.

A {\em self-testing algorithm}, as defined in \cite{BLR} for a function $f$ 
over a finite domain $D$, is an algorithm $T$ that gets
as input a program $P$ and a parameter $\epsilon$. The goal
is to determine whether program $P$ is correct on most inputs in $D$.   
$T$ may make calls to $P$ on any input.
If $P=f$ for all inputs in $D$, the self-testing algorithm $T$ returns PASS with
probability at least $2/3$; 
but if $P(x) \ne f(x)$ on $\ge \epsilon$ fraction of inputs in $D$, 
algorithm $T$ returns FAIL with probability $2/3$.
Note that if there exists an $x$ such that $P(x)\ne f(x)$, 
but $P(x) \ne f(x)$ on at most $\epsilon$ fraction of inputs in $D$, 
then $T$ may either output PASS or FAIL. 

In general, given a finite domain $D$, we say that $P$ is {\em $\epsilon$-close} to $f$ if the probability that $P(x) = f(x)$ over $x$ chosen uniformly from $D$ is $\ge 1-\epsilon$, and that $P$ is {\em $\epsilon$-far} from $f$ if it is not $\epsilon$-close. Therefore our self-testing algorithm returns FAIL with high probability ($2/3$) if $P$ is $\epsilon$-far from $f$. 

A {\em property testing algorithm} for a function family $F$,
as defined in \cite{RS96} 
(there described as an $\epsilon$-function-family-tester),
is an algorithm T that gets as input a program $P$ and a parameter $\epsilon$. 
$T$ may make calls to $P$ on any input. 
If for some $f \in F$, $P=f$ on all 
inputs, the testing algorithm $T$ returns PASS with 
probability at least $2/3$, but if $P$ is $\epsilon$-far from all $f \in F$, 
then $T$ returns FAIL with probability $2/3$. 

For checking, self-testing, and property testing algorithms, the parameters to optimize are the number of queries to the program $P$ and the additional computation time the algorithm needs to perform, where we define the {\em additional computation time} as the running time of the algorithm not including time spent by $P$ in answering queries. In this paper we focus on algorithms that make a constant number of queries to $P$ and incur an additional computation time that is only linear in the input size. Note that for functions such as integer multiplication, no linear time algorithm is known; the best known algorithm is $O(n \log n \log \log n)$, in \cite{Furer}. 
Thus it is not known how to test multiplication by comparing results
with another known multiplication program using only an additional
cost of linear time.

Here we focus on programs $P$ 
purporting to compute not just some linear function but a specific linear function $f$: for these programs, we give a self-testing algorithm. In the univariate case, 
such a function must be $f(x)=bx$ for some $b$. 
A similar constraint is true for the case of a linear homomorphism on a multidimensional vector space.
We rely on being able to compute $f(x)$ more quickly for inputs $x$ of a specific form: in particular, $b \cdot 2^n$ can be computed in linear time by shifting $b$ by $n$ spaces. 
Note that multiplication and division by $2^n$ can be done in time linear in $n$.

\section{Testing Algorithm}\label{section:tester}

We describe an algorithm that tests the correctness of a program $P_b$ that purports to compute a function $f_b(x) \dot{=} b \cdot x$. Let $P_b(x)$ indicate the program's output when given the input $x$. We refer to the domain of integers represented as $n$-bit numbers as $D_n$; the size of $D_n$ is $2^n$. 

\begin{theorem}\label{thm:testingalg}
There exists a self-testing algorithm for the linear function $f_b$ over the domain $D_n$ with $O(n/\epsilon)$ additional computation time and $O(1/\epsilon)$ queries to the program $P_b$.
\end{theorem}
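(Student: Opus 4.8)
The plan is to combine a Blum--Luby--Rubinfeld-style linearity test with a self-correction step that pins down the specific coefficient $b$; since $b$ is known to the tester, $f_b(1)=b$ requires no computation (and, as noted in the preliminaries, $b\cdot 2^{n-1}$ could alternatively be produced by a shift).

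\textbf{The algorithm.} Allow queries on a slightly enlarged domain (say $D_{n+2}$) so that sums of elements of $D_n$ remain valid inputs to $P_b$. First run the \emph{linearity test}: repeat $\Theta(1/\epsilon)$ times, drawing $x,y$ independently and uniformly from $D_n$ and rejecting if $P_b(x)+P_b(y)\ne P_b(x+y)$. If every such trial passes, run the \emph{coefficient check}: for $\Theta(1)$ (or, to be safe, $\Theta(1/\epsilon)$) independent uniform $r\in D_n$ compute $v_r \dot{=} P_b(1+r)-P_b(r)$, set $v\dot{=}\mathrm{maj}_r\,v_r$, and reject unless $v=b$. Accept if no step rejected. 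Each step involves $O(1)$ queries and $O(n)$ arithmetic on $n$-bit integers.

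\textbf{Correctness.} Completeness is immediate: if $P_b=f_b$ everywhere then every additive identity checked holds exactly and $v_r=b$ for all $r$. For soundness, suppose the tester accepts with probability $>1/3$; we may assume $\epsilon$ is below a suitable absolute constant, since otherwise we run with a smaller $\epsilon$ at only a constant-factor cost. Then the linearity test passes with probability bounded away from $0$, which — by the elementary counting argument the paper advertises — forces $P_b$ to agree on a $1-O(\epsilon)$ fraction of $D_n$ with the self-corrected function $g(x)\dot{=}\mathrm{maj}_{r}\,[P_b(x+r)-P_b(r)]$, and to make $g$ satisfy the additive identity $g(x)+g(y)=g(x+y)$ throughout the working range; hence (by the induction recalled in the introduction, using $g(0)=0$ and $g(k)=g(k-1)+g(1)$) $g=f_c$ on $D_n$ for the integer $c\dot{=}g(1)$. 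Because distinct linear functions $f_c,f_{c'}$ agree only at $0$, this $c$ is unique. In the coefficient check, closeness of $P_b$ to $f_c$ makes $v_r=c$ for a strict majority of $r\in D_n$, so the internal majority returns $v=c$; if $P_b$ were $\epsilon$-far from $f_b$ we would have $c\ne b$, so the check rejects, contradicting acceptance. Amplifying the two constants in the usual way yields the required $2/3$ gap.

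\textbf{Resources.} The linearity test makes $O(1/\epsilon)$ trials with $3$ queries and $O(n)$ additional time each; the coefficient check makes $O(1/\epsilon)$ queries with $O(n)$ additional time each (one subtraction per sample, plus the final comparison to $b$). In total this is $O(1/\epsilon)$ queries and $O(n/\epsilon)$ additional computation time, as claimed.

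\textbf{Main obstacle.} The delicate point is the soundness of the linearity test over $D_n$, which is a finite \emph{subset} of $\mathbb{Z}$ and not closed under addition, so the group-theoretic and Fourier-analytic proofs of BLR are unavailable. The crux is to show, using only counting and union bounds, that passing the test with non-negligible probability implies $g$ is well-defined almost everywhere, agrees with $P_b$ on a $1-O(\epsilon)$ fraction, and is additive across the whole working range; the loss incurred because a sum $x+y$ can leave $D_n$ has to be absorbed into the constants by choosing the enlarged domain generously. Everything else — the coefficient check and the resource count — is routine once this elementary linearity argument is established.
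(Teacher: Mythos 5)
There is a genuine gap, and you have named it yourself: the entire difficulty of the theorem is the soundness of a linearity test over $D_n$, which is an interval of integers and not a group, and your proposal defers exactly that step. You write that passing the BLR test forces $P_b$ to be close to the self-corrected function $g(x)=\mathrm{maj}_r[P_b(x+r)-P_b(r)]$ and forces $g$ to be additive ``by the elementary counting argument the paper advertises,'' but the paper proves no such statement. Its argument is for a different pair of tests and never introduces a self-corrected function or a majority vote at all. Concretely, the paper exploits the fact that $b$ is known from the start: it defines the discrepancy $d(x)=P_b(x)-bx$, first tests pairs $(x,2^n-x)$ against the known value $b\cdot 2^n$ to force the fractions of inputs with $d(x)>0$ and with $d(x)<0$ to be nearly equal (each close to $\epsilon_0/2$), and then runs a wraparound split $x=x_1+x_2 \pmod{2^n}$ (correcting by the computable quantity $b\cdot 2^n$, so all queries stay inside $D_n$) and argues that a random split realizes both $d(x_1)+d(x_2)>0$ and, on a disjoint event of comparable probability, $d(x_1)+d(x_2)<0$; since $d(x)$ is a single fixed number, one of these events must expose an inconsistency. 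None of this transfers to your test $P_b(x)+P_b(y)\stackrel{?}{=}P_b(x+y)$ with majority decoding, so you cannot cite the paper's counting argument to close your gap; you would instead need the rational/interval-domain BLR analyses (e.g.\ the arguments in the style of \cite{RS92},\cite{HW}), which is precisely the machinery the paper is written to avoid.

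Two secondary issues. First, your algorithm queries $P_b$ on $D_{n+2}$, but both the completeness and soundness guarantees in the self-tester definition are stated relative to $D_n$: a program equal to $f_b$ on all of $D_n$ but arbitrary on $D_{n+2}\setminus D_n$ must be passed, and your test would reject it; the paper's wraparound construction is what keeps every query inside the domain. Second, the step ``if $P_b$ were $\epsilon$-far from $f_b$ we would have $c\ne b$'' needs the closeness of $P_b$ to $f_c$ to be strictly less than $\epsilon$, not merely $O(\epsilon)$, so the constants in the deferred soundness lemma actually matter and cannot simply be ``absorbed.'' The resource accounting in your proposal is fine, but as written the proof of the central claim is missing.
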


We will also describe an algorithm that tests the correctness of $P_b$ on a specific input:

\begin{theorem}\label{thm:checker}
There exists a checking algorithm for the linear function $f_b$ over the domain $D_n$ with $O(n/\epsilon)$ additional computation time and $O(1/\epsilon)$ queries to the program $P_b$.
\end{theorem}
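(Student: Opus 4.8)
The plan is to build the checking algorithm on top of the self-tester from Theorem~\ref{thm:testingalg}, using the standard self-testing/self-correcting paradigm of \cite{BLR},\cite{BK}. First I would run the self-tester on $P_b$ with the parameter $\epsilon$ (or a constant times $\epsilon$); by Theorem~\ref{thm:testingalg} this costs $O(1/\epsilon)$ queries and $O(n/\epsilon)$ additional time, and if it passes we are assured (with probability $\ge 2/3$) that $P_b$ is $\epsilon$-close to the true function $f_b(x)=bx$ on $D_n$. If the tester fails, the checker outputs FAIL. So the only remaining issue is: given that $P_b$ agrees with $f_b$ on almost all inputs, decide whether it is correct on the \emph{specific} input $x$ that the checker was handed.

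The second ingredient is a self-corrector: a routine that computes $f_b(x)$ correctly with high probability by making a few queries to $P_b$ at \emph{random} points, exploiting linearity. The natural construction here is to pick a random $r\in D_n$ (or, to handle the fact that $D_n$ is a finite interval rather than a group, pick $r$ uniformly from a suitably sized sub-interval so that $x+r$ and $r$ both land in a range where $P_b$ is still dense), query $P_b(x+r)$ and $P_b(r)$, and output $P_b(x+r)-P_b(r)$. Since $P_b$ is $\epsilon$-close to $f_b$, a union bound shows each of $P_b(x+r)$ and $P_b(r)$ equals the true value with probability $\ge 1-O(\epsilon)$, so the difference equals $bx$ with probability $\ge 1-O(\epsilon)$; repeating $O(1/\epsilon)$ times (or amplifying by majority/repetition to a small constant error) boosts this. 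The checker then compares $P_b(x)$ against this self-corrected value of $f_b(x)$ and outputs PASS iff they agree. Correctness of the checker follows: if $P=f_b$ everywhere the tester passes and the comparison always succeeds, so we PASS with probability $\ge 2/3$; if $P_b(x)\ne f_b(x)$ at the queried input, then either the tester already fails (if $P_b$ is $\epsilon$-far) or $P_b$ is $\epsilon$-close, in which case the self-corrector returns the true value $f_b(x)\ne P_b(x)$ with high probability and we FAIL. The query count is $O(1/\epsilon)$ and the extra computation is $O(n/\epsilon)$, since all the arithmetic involved (sampling $n$-bit numbers, adding/subtracting $n$-bit results) is linear in $n$, and we never need to compute $b\cdot x$ directly.

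The main obstacle, and the place where care is needed, is the same one that complicates the univariate self-tester: the domain $D_n$ is a finite initial segment of $\mathbb{Z}$, not a group, so $x+r$ can overflow out of $D_n$, and the $\epsilon$-closeness guarantee only controls the behavior of $P_b$ on $D_n$ itself. I would handle this exactly as the self-tester does — restricting the random shift $r$ to come from an interval of length comparable to $|D_n|$ so that $x+r$ stays within a bounded enlargement of $D_n$ on which density of agreement is still $1-O(\epsilon)$, and possibly computing $f_b$ on certain structured points (e.g.\ using the fact, noted in the preliminaries, that $b\cdot 2^n$ is computable in linear time by shifting) to "wrap around" and reduce everything to points inside $D_n$. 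The precise choice of shift distribution and the accounting of how the error parameter degrades through the union bound is the one genuinely delicate calculation; everything else is bookkeeping. I would present the self-corrector as a standalone lemma and then assemble the checker in a few lines.
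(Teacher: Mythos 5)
Your proposal is correct and follows essentially the same route as the paper: the paper's checker runs the self-tester (Algorithm~\ref{algo:Test}) and then performs a single {\sc RandSplit} at the given input, which is exactly your ``self-correct via a random split and compare against $P_b(x)$'' step, with overflow handled by the same $b\cdot 2^n$ wraparound, and correctness argued by the same two-case analysis ($P_b$ far from $f_b$ versus close, with a union bound in the close case).
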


The proof of Theorem~\ref{thm:checker} is given in the appendix.

To prove Theorem~\ref{thm:testingalg}, we first define a function RandSplit that is useful in our testing and checking algorithms and will be redefined in the case of a homomorphism on a vector space. The idea of this function is to use the distributive property of multiplication: $b(a+c)=ba+bc$. It should be the case that $P_b(a+c)=P_b(a)+P_b(c )$, so we will check that this is the case for random $a$ and $c$. One issue is that
$a+c$ can be bigger than $2^n$, though not bigger than $2^{n+1}$.   We
can keep track of which is the case, and
if $(a+c)= 2^n+ x$  we can check that $P_b(a)+P_b(x) + b \cdot 2^n = P_b(a+c)$ instead. We can use this `wraparound' property and our ability to calculate $b \cdot 2^n$ easily, 
via a linear time shift operation, 
to verify that $P_b$ satisfies this distributive property.

\IncMargin{1em}
\begin{algorithm}[H]

\SetKwFunction{Rand}{Rand}\SetKwFunction{RandSplit}{RandSplit}
\SetKwInOut{Input}{input}\SetKwInOut{Output}{output}
\Input{An $n$-bit number $x$, an integer $b$, a program $P_b$ for computing multiplication by $b$}
\Output{FAIL or PASS}
$x_1 \leftarrow$ \Rand(0,$2^n$)\;
\lIf{$x_1 < x$}{ $\delta \leftarrow 0$}
\lElse{ $\delta \leftarrow 1$}
$x_2 \leftarrow \delta \cdot 2^n + x - x_1$\;
\If{$P_b(x_1) + P_b(x_2) \ne b \cdot  \delta \cdot 2^n + P_b(x)$}{
\Return FAIL\;
}
\Return PASS\;
\caption{{\sc RandSplit}}
\label{algo:RandSplit}
\end{algorithm}\DecMargin{1em}

We now use Algorithm~\ref{algo:RandSplit} to create a testing algorithm. Before running Algorithm~\ref{algo:RandSplit} at least $k_2$ times, we test $n$-bit inputs summing to $2^n$ for the same distributive property $k_1$ times. The combination of these two tests will allow us to detect errors in $P_b$. We will show that it suffices for $k_1+k_2$ to be set to $O(1/\epsilon)$. 

\IncMargin{1em}
\begin{algorithm}[H]
\SetKwFunction{Rand}{Rand}\SetKwFunction{RandSplit}{RandSplit}
\SetKwInOut{Input}{input}\SetKwInOut{Output}{output}
\Input{A program $P_b$ for computing multiplication by $b$}
\Output{FAIL or PASS}

\For{$i\leftarrow 1$ \KwTo $k_1$}{ \label{algo:Test:fflst}
$x \leftarrow$ \Rand(0,$2^n$)\;
\If{$P_b(x) + P_b(2^n-x) \ne b \cdot 2^n$}{  \label{algo:Test:fflif}
\Return FAIL\; \label{algo:Test:ffl}
}
} \label{algo:Test:fflend}

\For{$i\leftarrow 1$ \KwTo $k_2$}{ \label{algo:Test:sflst}
$x \leftarrow$ \Rand(0,$2^n$)\; \label{algo:Test:sfl} 
\If{\RandSplit$(x,b,P_b)$=FAIL}{
\Return FAIL\; \label{algo:Test:sfl2}
}
} \label{algo:Test:sflend} 
\Return PASS\;
\caption{{\sc Test}}
\label{algo:Test}
\end{algorithm}\DecMargin{1em}

\begin{lemma}[Main Lemma]\label{lemma:main} If $P_b$ is correct on $\le1-\epsilon$ fraction of inputs in $D_n$, then Algorithm~\ref{algo:Test} returns FAIL with probability $\ge 3/4$. \end{lemma}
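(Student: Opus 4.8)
I would prove the contrapositive: assuming Algorithm~\ref{algo:Test} returns PASS with probability strictly more than $1/4$, I will show that $P_b$ agrees with $f_b$ on strictly more than a $(1-\epsilon)$-fraction of $D_n$. The engine is a self-correction (BLR-style) argument applied not to $P_b$ but to the error function.

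\textbf{Step 1 (reduce to small one-step rejection rates).} Let $\delta_1$ be the probability that a single iteration of the first loop rejects and $\delta_2$ the probability that a single call to \textsc{RandSplit} rejects. Since the $k_1+k_2$ iterations use independent randomness, the overall PASS probability is $(1-\delta_1)^{k_1}(1-\delta_2)^{k_2}$. Choosing $k_1=k_2=\lceil c/\epsilon\rceil$ for a suitable absolute constant $c$ (so that $(1-t)^{c/\epsilon}\le 1/4$ once $t\ge \epsilon/6$), the assumption forces $\delta_1,\delta_2<\epsilon/6$; in particular $\delta_2<1/6$ and $\delta_2<\epsilon/2$. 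This is the only place the loop counts enter, and in fact only $\delta_2$ is used below.

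\textbf{Step 2 (translate to a linearity test for the error function).} Put $e(a):=P_b(a)-ba$, and for $a,a'\in D_n$ write $a\oplus a':=(a+a')\bmod 2^n$ with carry $\chi(a,a')\in\{0,1\}$, so $a+a'=(a\oplus a')+\chi(a,a')\,2^n$. One checks that in \textsc{RandSplit} the quantity $\delta$ equals $\chi(x_1,x_2)$ and that $(x_1,x_2)$ is, over the internal coin tosses, uniform on $\{(a,a'):a\oplus a'=x\}$, up to one degenerate pair (the case $x_2=2^n$) of probability $2^{-n}$ which I fold into the constants; together with the uniform choice of $x$ this makes $(x_1,x_2)$ essentially uniform on $D_n\times D_n$. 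Substituting $P_b=f_b+e$ into the test $P_b(x_1)+P_b(x_2)=b\,\delta\,2^n+P_b(x_1\oplus x_2)$, every $b$-term cancels and the test becomes exactly $e(x_1)+e(x_2)=e(x_1\oplus x_2)$. Hence, up to $O(2^{-n})$, $\delta_2=\Pr_{a,a'\in D_n}[\,e(a)+e(a')\ne e(a\oplus a')\,]$: the BLR linearity-test rejection rate for $e$ viewed as a function from the group $\mathbb{Z}_{2^n}$ to the group $\mathbb{Z}$. (The first loop similarly becomes: $e(a)+e(2^n-a)=0$ for a $(1-\delta_1)$-fraction of $a$.)

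\textbf{Step 3 (self-correct $e$) and Step 4 (conclude).} Define $g(a)$ to be the most frequent value of $e(a\oplus y)-e(y)$ as $y$ ranges uniformly over $D_n$. Using only that $y\mapsto a\oplus y$ is a bijection of $\mathbb{Z}_{2^n}$, I would establish three facts: (i) \emph{well-definedness}: for every $a$, $\Pr_y[\,e(a\oplus y)-e(y)\ne g(a)\,]\le 2\delta_2$ — apply the Step~2 estimate to the two (uniform) pairs $(a\oplus y_1,y_2)$ and $(a\oplus y_2,y_1)$, whose $\oplus$-sums agree, to get $\Pr_{y_1,y_2}[\,e(a\oplus y_1)-e(y_1)=e(a\oplus y_2)-e(y_2)\,]\ge 1-2\delta_2$, then invoke the collision bound $\max_v\Pr[X=v]\ge \Pr[X_1=X_2]$; (ii) \emph{closeness}: $\Pr_a[\,g(a)\ne e(a)\,]\le 2\delta_2$, by disjointness of (i) with the event $e(a\oplus y)-e(y)=e(a)$ (whose probability averages to $1-\delta_2$ over $a$) and Markov; (iii) \emph{additivity}: for all $a,a'$, $g(a)+g(a')=g(a\oplus a')$ — for random $y$ the three defining equations for $g(a)$ via $y$, for $g(a')$ via $a\oplus y$, and for $g(a\oplus a')$ via $y$ each hold with probability $\ge 1-2\delta_2$, so all three hold at once with probability $\ge 1-6\delta_2>0$; on that event they telescope to $g(a)+g(a')=g(a\oplus a')$, an identity with no randomness left, hence it holds always. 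By (iii) $g$ is a homomorphism $\mathbb{Z}_{2^n}\to\mathbb{Z}$; from $g(0)=g(0)+g(0)$ we get $g(0)=0$, so $2^n g(1)=g(0)=0$ in the torsion-free group $\mathbb{Z}$, forcing $g(1)=0$ and $g(a)=a\,g(1)=0$ for every $a\in D_n$. Then by (ii), $\Pr_a[\,P_b(a)=f_b(a)\,]=\Pr_a[\,e(a)=0\,]=\Pr_a[\,e(a)=g(a)\,]\ge 1-2\delta_2>1-\epsilon$, contradicting the lemma's hypothesis. This proves the contrapositive.

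\textbf{Expected main obstacle.} The only genuinely non-routine step is (iii), the closure of the self-corrector, and it is exactly there that $\delta_2$ being below an absolute constant is needed; the rest (the product-of-probabilities reduction, the cancellation of $b$-terms and the carry bookkeeping, well-definedness, closeness, torsion-freeness) is mechanical, the lone nuisance being the careful treatment of the $O(2^{-n})$ boundary effects from \textsc{Rand} and from the $x_2=2^n$ corner of \textsc{RandSplit}. One caveat to flag: the argument above never uses the first loop, so I would watch whether the authors instead use it for a more hands-on pinning of the constant $b$ (thereby sidestepping step (iii)) or whether it is needed for a subtlety not visible at this level.
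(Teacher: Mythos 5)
Your proof is essentially correct, but it takes a genuinely different route from the paper's. You run the classical BLR self-correction argument on the error function $e(x)=P_b(x)-bx$ over the group $\mathbb{Z}_{2^n}$: the carry term $b\,\delta\,2^n$ in Algorithm~\ref{algo:RandSplit} makes the test exactly a linearity test for $e$, the plurality-corrected function $g$ is shown to be a homomorphism $\mathbb{Z}_{2^n}\to\mathbb{Z}$ once the rejection rate is below $1/6$, and torsion-freeness of $\mathbb{Z}$ forces $g\equiv 0$, hence $P_b$ is $2\delta_2$-close to $f_b$. The paper deliberately avoids exactly this machinery (its stated goal is a proof using ``little algebraic structure''): it instead defines the discrepancy $d(x)=P_b(x)-bx$, uses the first loop (pairs $x,2^n-x$) to force the fractions of inputs with $d>0$ and with $d<0$ to be balanced near $\epsilon_0/2\pm\beta$, and then a counting/Markov argument over ``opposite-sign matches'' shows that for a typical $x$ a random split has $d(x_1)>0,\ d(x-x_1)\ge 0$ (or the mirror case) with probability $\ge \epsilon_0/2-\beta-\frac{1}{\alpha}(\epsilon_0/2)^2$, which contradicts $d(x_1)+d(x_2)=d(x)$ and yields a per-iteration FAIL probability of that order. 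Your closing caveat is exactly right: in your argument the first loop is never used (the second loop alone is sound, since any homomorphism from the finite group $\mathbb{Z}_{2^n}$ into $\mathbb{Z}$ is trivial), whereas in the paper's elementary argument the first loop is essential, as it is what pins down the sign balance of the discrepancies. What each buys: your route gives a cleaner, linear-in-$\epsilon_0$ rejection bound with no $(\epsilon_0/2)^2$ loss and shows one of the two loops is redundant for soundness, but it reintroduces the algebraic self-correction apparatus the paper is explicitly trying to dispense with; the paper's route is longer per step but uses only counting and Markov. The only soft spot in your write-up is the $O(2^{-n})$ bookkeeping around the degenerate split ($x_1=x$, so $x_2=2^n$) and the excluded pairs with $x_2=0$: folding these into constants is fine except when $\epsilon$ is within a constant factor of $2^{-n}$, so you should either assume $\epsilon=\Omega(2^{-n})$ or handle that corner explicitly — a level of fussiness comparable to details the paper itself elides, so I would not call it a gap.
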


\begin{proof}

We first define the discrepancy of an input:

\begin{definition} The discrepancy of $x$ is $d(x) \dot{=} P_b(x) - x \cdot
b$. \end{definition}

If $P_b(x)$ is correct, then $d(x)=0$. By our assumption, $d(x) \ne 0$ for at least $\epsilon$ fraction of the inputs; we let $\epsilon_0$ be the actual fraction of inputs for which $d(x) \ne 0$. We see that $\epsilon$ is the input to the program and $\epsilon_0$ is the actual error rate of the program. What we wish to show is that when $\epsilon_0 \ge \epsilon$, then we return FAIL with high probability.

We now show that if a function is likely to pass the test, the discrepancy function must have a certain form; the number of inputs with positive discrepancy and the number of inputs with negative discrepancy must be about the same. 

\begin{proposition} If less that $\epsilon_0/2-\beta$ fraction of the numbers in the domain $D_n$ have discrepancy $>0\; (<0)$ then Line~\ref{algo:Test:ffl} will output FAIL with probability at least $2\beta$. \end{proposition}

\begin{proof}

Consider ordered pairs of the form $(x,2^n-x)$. The test in Line~\ref{algo:Test:fflif} pairs the numbers $(x,2^n-x)$ and detects an error if their discrepancies do not sum to $0$. Therefore if no error is detected but $x$ has positive discrepancy, $2^n-x$ must have discrepancy strictly less than $0$.
If $\le \epsilon_0/2-\beta$ fraction of the numbers have discrepancy strictly less than $0$, then since we know that $1-\epsilon_0$ fraction of the numbers have discrepancy $0$ (since $P_b$ is correct on those numbers), we see that $\ge \epsilon_0/2+\beta$ of the numbers have positive discrepancy. When pairing the numbers into $(x,2^n-x)$, we see that an error will be detected if the discrepancies do not sum to $0$.  However, by our assumption, there are fewer numbers with discrepancy strictly less than $0$; specifically, at least $\epsilon_0/2+\beta-(\epsilon_0/2-\beta)=2\beta$ fraction of the numbers more have $d(x)>0$ than have $d(x)<0$. Therefore $2\beta$ of the $(x, 2^n-x)$ pairs have $d(x) > 0$ and $d(2^n-x)\ge 0$. Therefore one iteration of the first `for loop' gives an error with probability $\ge 2\beta$ because the discrepancies need to sum to 0. The same proof works for showing that $> \epsilon_0/2-\beta$ fraction of the numbers must have discrepancy strictly greater than $0$ or there will be an error with probability $2\beta$. 
\end{proof}

This means that we can assume that if the loop in Lines~\ref{algo:Test:fflst}-\ref{algo:Test:fflend} in Algorithm~\ref{algo:Test} passes with probability $2\beta$, then the fraction of numbers with discrepancy greater than $0$ is $> \epsilon_0/2-\beta$ but $< \epsilon_0/2+\beta$. The same is true for the fraction of numbers with discrepancy strictly less than $0$. 

We now define a $\delta_x$ function related to the discrepancy function.

\begin{definition} $\delta_x(x_1) = 1$ if $d(x_1) > 0 $ and $d((x-x_1) \bmod 2^n) <
0$.  $\delta_x(x_1) = 0$ otherwise. \end{definition}

\begin{definition} The {\em number of opposite-sign matches} is 
$\sum_{0 \leq x_1 \leq 2^n} \delta_x(x_1)$. \end{definition}

\begin{proposition}
The expected {\em number of opposite-sign matches}
is $\leq (\epsilon_0/2)^22^n$. \end{proposition}

\begin{proof}
We assume that the fraction of inputs with $d(x) > 0$ is $\epsilon_1$, and the fraction of inputs with $d(x) < 0$ is $\epsilon_2$. We note that $\epsilon_1+\epsilon_2=\epsilon_0$. 
Using the fact that $x_1$ gets matched to any $x_2$ with 
probability
$1/2^n$, we see that the probability that $x_1$ has positive discrepancy and $x_2$ has negative disrepancy is just $\epsilon_1\epsilon_2$, so the expected number of opposite-sign matches is $\epsilon_1\epsilon_22^n\le(\epsilon_0/2)^22^n$. 
\end{proof}

We let $0 < \alpha < 1$ be a parameter that we will set later.

\begin{definition} $x$ is {\em good} if the number of opposite-sign matches is 
$\leq \frac{1}{\alpha} (\epsilon_0/2)^22^{n}$ and {\em bad} otherwise. \end{definition}

\begin{proposition} The probability that a bad $x$ is picked in Line~\ref{algo:Test:sfl} is
$\le \alpha$. \end{proposition}

\begin{proof} Apply Markov's inequality to the definition of a good $x$. \end{proof}

\begin{proposition}\label{prop:gooderr} If a good $x$ is picked in Line~\ref{algo:Test:sfl}, then with probability
$\ge \epsilon_0/2 - \beta- \frac{1}{\alpha} (\epsilon_0/2)^2$,  Line~\ref{algo:Test:sfl2} outputs FAIL. \end{proposition}

\begin{proof}
The probability of picking any $(x_1,x-x_1)$ pair during verification is $\frac{1}{2^n}$.

If we have $d(x_1)>0$ and $d(x_2)\ge0$,  then $d(x_1)+d(x-x_1)>0$. We see that the probability that $d(x_1)>0$ is $>\epsilon_0/2-\beta$ and the number of matches with $d(x_1)>0$ and $d(x-x_1)<0$ is $\le \frac{1}{\alpha} (\epsilon_0/2)^2$. Therefore the probability that $d(x_1)>0$ and $d(x-x_1)\ge0$ is $\ge \epsilon_0/2-\beta - \frac{1}{\alpha} (\epsilon_0/2)^2$. A similar proof shows that the probability that $d(x_1)<0$ and $d(x-x_1)\le0$ is also $\ge \epsilon_0/2-\beta -  \frac{1}{\alpha} (\epsilon_0/2)^2$. 

In the first case we note that $d(x_1)+d(x-x_1)>0$, while in the second case $d(x_1)+d(x-x_1)<0$. Therefore the probability that $d(x_1)+d(x-x_1)>0$ is $\ge \epsilon_0/2-\beta -  \frac{1}{\alpha} (\epsilon_0/2)^2$, as is the probability that $d(x_1)+d(x-x_1)<0$. Since the test only passes if $d(x)=d(x_1)+d(x-x_1)$ and $d(x)$ cannot be both positive and negative, a mistake must be found with probability at least $\ge \epsilon_0/2-\beta -  \frac{1}{\alpha} (\epsilon_0/2)^2$.
\end{proof}

Since we are given the value of $\epsilon$ and that $\epsilon_0 \ge \epsilon$, and we know the probability of finding an error in each part of the algorithm, we need only repeat the parts enough times to catch the error in order to output FAIL with high probability. 
\end{proof}

\subsection{Putting It Together}\label{sec:puttingittogether}

We now consider possible specific values for $\alpha,\beta$. Let $\beta=\epsilon/4$ and $\alpha=2/3$. 

What is the runtime of Algorithm~\ref{algo:Test}? Applying the Chernoff bound as given in the Appendix, we see that if we want to be able to say that $\ge \epsilon_0/2-\beta$ of the inputs have discrepancy $<0$ or $>0$ with probability $\ge 7/8$ we need to have $k_1=O(1/\beta)=O(1/\epsilon)$. The probability of detecting an error in $P_b$ in the loop in Lines~\ref{algo:Test:sflst}-\ref{algo:Test:sflend} is $\ge (1-\alpha)(\epsilon_0/2-\beta -  \frac{1}{\alpha} (\epsilon_0/2)^2) = \frac{\epsilon_0/4-3\epsilon_0^2/8}{3}$; we therefore need to run the loop $k_2=O(\frac{1}{2\epsilon-3\epsilon^2})$ times in order to expect to see this error with probability $7/8$; for sufficiently small $\epsilon$, this is $O(1/\epsilon)$ as well. Therefore $k_1+k_2$ is $O(1/\epsilon)$ as desired. If we see errors in both parts of the algorithm with probability $7/8$, by a union bound the probability that we fail to catch an error in $P_b$ in both parts is $1/4$, so we output FAIL with probability $3/4$ as desired.

If $\epsilon = 1/8$, then using the specific values for the Chernoff bound given in Theorem~\ref{thm:chernoff} in the Appendix, we see that $k_1=96$ and $k_2\approx 709$ 
will give us our desired result: Algorithm~\ref{algo:Test} will return FAIL with probability $\ge 3/4$.

\begin{proof}[Proof of Theorem~\ref{thm:testingalg}]
If $P_b$ is correct on all inputs in $D_n$, then Algorithm~\ref{algo:Test} will pass. 
If $P_b(x)\neq f(x)$ on at least $\epsilon$ fraction of the inputs in $D_n$,
then we have shown in Lemma~\ref{lemma:main} 
that Algorithm~\ref{algo:Test} returns FAIL with probability $\ge 3/4$. 
We have also shown that setting $k_1+k_2$ to be $O(1/\epsilon)$ lets Algorithm~\ref{algo:Test} will return FAIL with probability $\ge 3/4$, so the number of queries to  $P_b$ will also be $O(1/\epsilon)$. 
The extra computation done by the algorithm consists only of linear time operations such as the shift allowing us to compute $b \cdot 2^n$; 
therefore the additional computation time is $O(n(k_1+k_2))=O(n/\epsilon)$. 
Thus Algorithm~\ref{algo:Test} is a valid self-testing algorithm for the function $f_b$. 
\end{proof}

\subsection{General Linear Function}

We include a property testing algorithm for the case where we do not know what linear function $P$ claims to compute; only that it claims to compute a linear function. In this section, the property tester needs to pass any linear function and fail any function that is not $\epsilon$-close to some linear function. 

\begin{theorem}\label{thm:generallineartest}
There exists a property testing algorithm for linear functions on $n$-bit inputs with $O(n/\epsilon)$ additional computation time and $O(1/\epsilon)$ queries to the program $P_b$.
\end{theorem}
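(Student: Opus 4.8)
The plan is to reduce this to the known-multiplier self-tester of Theorem~\ref{thm:testingalg}. The point is that every function $f_{\hat b}(x) = \hat b x$ is itself a linear function on $D_n$, so if $P$ is $\epsilon$-far from \emph{all} linear functions then it is $\epsilon$-far from $f_{\hat b}$ for \emph{any} value $\hat b$ we choose; and if $P$ happens to equal some linear function $f_b$ exactly, then that $b$ is read off from a single query, since $b = f_b(1) = P(1)$.

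So the property-testing algorithm is: query $P(1)$ and set $\hat b := P(1)$; then run Algorithm~\ref{algo:Test} on $P$ with claimed multiplier $\hat b$ and return its answer. I would then verify the two cases demanded by the property-testing definition. First, if $P = f_b$ on all of $D_n$, then $\hat b = P(1) = b$, and since $P$ agrees everywhere with $f_{\hat b}$, every equality tested by Algorithm~\ref{algo:Test} — the wraparound check in Line~\ref{algo:Test:fflif} and the equality checked inside RandSplit (Algorithm~\ref{algo:RandSplit}) — holds identically, so the algorithm returns PASS with probability $1$. Second, if $P$ is $\epsilon$-far from every linear function on $D_n$, then in particular $P$ is $\epsilon$-far from $f_{\hat b}$, i.e. $P(x) = \hat b x$ on at most a $1 - \epsilon$ fraction of $D_n$; Lemma~\ref{lemma:main} then says Algorithm~\ref{algo:Test} returns FAIL with probability $\geq 3/4 \geq 2/3$. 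Nothing needs to be shown in the promise gap, since the property-testing definition imposes no requirement when $P$ is $\epsilon$-close to some linear function but not equal to one.

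The resource bounds come for free: the single extra query for $P(1)$ and the linear-time shift that forms $\hat b \cdot 2^n$ do not affect the asymptotics, so the algorithm makes $O(1/\epsilon)$ queries and runs in $O(n/\epsilon)$ additional time, exactly as in Theorem~\ref{thm:testingalg}.

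The only place where care is needed — and the reason the argument is this short — is recognizing that the formal guarantee only forces a PASS when $P$ is \emph{exactly} linear, which is precisely the case where $P(1)$ is guaranteed to be the right multiplier. If one wanted the stronger behavior of also passing programs that are merely $\epsilon$-close to a linear function, the fix is to obtain $\hat b$ by self-correction instead, e.g. as the majority value of $P(r+1) - P(r)$ over $O(1)$ uniformly random $r < 2^{n-1}$; when $P$ is close enough to some $f_b$ this returns $\hat b = b$ with high probability, and then Algorithm~\ref{algo:Test} behaves as before. Making that variant precise — bounding the error rate of $P$ restricted to the half-domain $\{0,\dots,2^{n-1}-1\}$ and choosing the number of majority samples via a Chernoff bound — is the one slightly technical step, though still routine.
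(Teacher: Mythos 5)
Your proposal is correct and is essentially the paper's own argument: learn the multiplier with a single query and then invoke Lemma~\ref{lemma:main} (equivalently, run Algorithm~\ref{algo:Test} with that multiplier), noting that completeness is only required when $P$ is exactly linear and soundness follows because $f_{\hat b}$ is itself a member of the family. The only cosmetic difference is that you read $\hat b$ off $P(1)$, whereas the paper's Algorithm~\ref{algo:GeneralLinearTest} reads it off $P(2^n)$ together with a divisibility check; the query and time bounds are identical.
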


The idea of this general linear testing algorithm, or property testing algorithm, is similar to Algorithm~\ref{algo:Test} above in that we run two tests several times each. The first test here is to check that for various pairs of inputs $x,2^n-x$ that sum to $2^n$ that $P(x)+P(2^n-x)$ is always equal to $P(2^n)$, which must be $b \cdot 2^n$ for some integer $b$. This will allow us to reduce to the case of a specific linear function for the second part of this test, which is just running Algorithm~\ref{algo:RandSplit} on various inputs $x$ for the value of $b$ that we found in the first part. In much the same way as above, this will allow us to detect errors in $P$. We note that this algorithm essentially learns the value of $b$. 

\IncMargin{1em}
\begin{algorithm}[H]
\SetKwFunction{Rand}{Rand}\SetKwFunction{RandSplit}{RandSplit}
\SetKwInOut{Input}{input}\SetKwInOut{Output}{output}
\Input{A program $P$ claiming to compute a linear function}
\Output{FAIL or PASS}

$a \leftarrow P(2^n)$\;

\If{$2^n \nmid a$}{ \label{algo:GeneralLinearTest:ft}
\Return FAIL\;
}

\For{$i\leftarrow 1$ \KwTo $k_1$}{
$x \leftarrow$ \Rand(0,$2^n$)\;
$a_i \leftarrow P(x)+P(2^n-x)$\;
\If{$a_i \ne a$}{
\Return FAIL\;
}
}

$b \leftarrow a/2^n$\;

\For{$i\leftarrow 1$ \KwTo $k_2$}{
$x \leftarrow$ \Rand(0,$2^n$)\;
\If{\RandSplit$(x,b,P)$=FAIL}{
\Return FAIL\;
}
}
\Return PASS\;
\caption{{\sc GeneralLinearTest}}
\label{algo:GeneralLinearTest}
\end{algorithm}\DecMargin{1em}

\begin{lemma}\label{lemma:generallineartesthelper}
If $P$ is $\epsilon$-far from linear, then Algorithm~\ref{algo:GeneralLinearTest} returns FAIL with probability $\ge 3/4$. 
\end{lemma}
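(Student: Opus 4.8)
The plan is to mirror the proof of Lemma~\ref{lemma:main} (the Main Lemma), reducing to it once the algorithm has committed to a value of $b$. First I would dispose of the trivial case: if $2^n \nmid a$, where $a = P(2^n)$, then Algorithm~\ref{algo:GeneralLinearTest} returns FAIL with probability $1$ at Line~\ref{algo:GeneralLinearTest:ft}, so the conclusion holds regardless of what $P$ is. Hence for the remainder of the argument I may assume $2^n \mid a$, so that $b \dot{=} a/2^n$ is a well-defined integer and the algorithm proceeds to its two loops.

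Next I would observe that, with this value of $b$, the two loops of Algorithm~\ref{algo:GeneralLinearTest} are identical to the two loops of Algorithm~\ref{algo:Test}: in the first loop, the condition $a_i \ne a$ is exactly $P(x) + P(2^n-x) \ne b \cdot 2^n$ since $a = b \cdot 2^n$; and in the second loop both algorithms simply invoke RandSplit$(x, b, \cdot)$. Thus, conditioned on the same coin tosses, Algorithm~\ref{algo:GeneralLinearTest} (after passing Line~\ref{algo:GeneralLinearTest:ft}) returns FAIL precisely when Algorithm~\ref{algo:Test} run with parameter $b$ would, and the single extra deterministic query $P(2^n)$ affects neither this behavior nor the asymptotic query/time bounds. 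It therefore suffices to verify the hypothesis of Lemma~\ref{lemma:main} for this $b$, namely that $P$ agrees with $f_b$ on at most a $1-\epsilon$ fraction of $D_n$. But this is immediate: $f_b$ is one of the linear functions, and $P$ is assumed $\epsilon$-far from every linear function, hence in particular $\epsilon$-far from $f_b$, i.e. $\Pr_x[P(x) = f_b(x)] < 1-\epsilon$. Applying Lemma~\ref{lemma:main} with the same parameter choices $\beta = \epsilon/4$, $\alpha = 2/3$, and $k_1, k_2 = O(1/\epsilon)$ as in Section~\ref{sec:puttingittogether} then yields FAIL with probability $\ge 3/4$, from which Theorem~\ref{thm:generallineartest} follows just as Theorem~\ref{thm:testingalg} follows from Lemma~\ref{lemma:main}.

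The one point that needs care — and the closest thing to an obstacle — is that $b$ is chosen adaptively from the program's own output $P(2^n)$, so one might worry about circularity in even defining the discrepancy $d(x) = P(x) - bx$ before $b$ is known. This is not a real difficulty: $P(2^n)$ is obtained by a single non-randomized query made before any coin is tossed, so once Line~\ref{algo:GeneralLinearTest:ft} is passed the value $b$, the target $f_b$, and the function $d$ are all fixed, and every probabilistic ingredient of Lemma~\ref{lemma:main} — the forced balance between positive- and negative-discrepancy inputs, the Markov bound defining good $x$, the opposite-sign-match count, and the Chernoff bound fixing $k_1,k_2$ — goes through verbatim with respect to this fixed $f_b$. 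I would also record, as a consistency check, that $d(2^n) = P(2^n) - b\cdot 2^n = 0$ by construction, so no input with nonzero discrepancy is hidden at the special point $2^n$; this is reassuring but is not actually needed for the first-loop analysis. No new inequalities arise, so the complexity claims are inherited directly from Algorithm~\ref{algo:Test}.
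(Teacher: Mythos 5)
Your proof is correct and follows essentially the same route as the paper: handle the divisibility check, then reduce to Algorithm~\ref{algo:Test} with $b = a/2^n$, noting that $\epsilon$-far from all linear functions implies $\epsilon$-far from $f_b$, and invoke Lemma~\ref{lemma:main}. Your added remarks on the non-adaptivity of the query $P(2^n)$ and the trivial FAIL case only make explicit points the paper leaves implicit.
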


\begin{proof}
If $P$ passes the test in Line~\ref{algo:GeneralLinearTest:ft} that $2^n \mid P(2^n)$ we can reduce to the case where $P=P_b$ for $b=a/2^n$, and the testing algorithm is equivalent to the original testing algorithm. Since $P$ is $\epsilon$-far from linear, it is $\epsilon$-far from the function $b \cdot x$, so then by Lemma~\ref{lemma:main}, since Algorithm~\ref{algo:Test} will return FAIL with probability $\ge 3/4$, so will Algorithm~\ref{algo:GeneralLinearTest}.
\end{proof}

\begin{proof}[Proof of Theorem~\ref{thm:generallineartest}]
If $P$ is linear, Algorithm~\ref{algo:GeneralLinearTest} will output PASS with probability $1$. 
From Lemma~\ref{lemma:generallineartesthelper} 
it is clear that Algorithm~\ref{algo:GeneralLinearTest} 
will output FAIL with probability $\ge 3/4$ if we set $k_1+k_2=O(1/\epsilon)$ as in Algorithm~\ref{algo:Test}.
We see that it makes the same number of queries and has extra running 
time on the same order as Algorithm~\ref{algo:Test}. 
Therefore it is a linearity property tester
with additional computation time $O(n/\epsilon)$ and queries $O(1/\epsilon)$.
\end{proof}

\section{Multivariate Linear Functions}

In this section we give self-testers for linear homomorphisms on a vector space. For a vector space $V$ of dimension $m$, we say $f$ is a {\em linear homomorphism} on $V$ if for any ${\bf x_1},{\bf x_2} \in V$, $f({\bf x_1}) + f({\bf x_2}) = f({\bf x_1} + {\bf x_2})$. An example of a linear homomorphism in two variables is $f(x,y)=x+y$. 

In an $m$-dimensional vector space $V$, we let ${\bf e_i}$ be the vector that has $0$ for all coordinates but the $i$th, which is $1$. Using linear algebra, we see that any linear homomorphism on $V$ is determined by its values on the ${\bf e_i}$ - in fact, if we let $b_i=f({\bf e_i})$ for all $i$, then for any vector $\langle x_1,\dots ,x_m \rangle$ we see that $f(\langle x_1,\dots,x_m \rangle) = \sum_{i=1}^m b_ix_i$. 

To test multivariate linear homomorphisms, we make the assumption that we know $b_i=f({\bf e_i})$ for all $i$. Note that this is a generalization of the univariate case, in which we need to assume we know the value of $f(2^n)=b\cdot 2^n$, which means we know $b$. 

Let $P$ be a program that purports to compute $f$. We can modify the algorithm above to replace $b$ with the $b_i$ and modify the random-split function to split one vector into a random pair of vectors (by using the usual random-split function component-wise). Instead of calling Algorithm~\ref{algo:RandSplit} on a number $x$, integer $b$, and program $P$, we call it on a vector ${\bf x}$ and program $P$ and verify that for a random vector ${\bf y}$ we have $f({\bf y}) + f({\bf x} - {\bf y})=f({\bf x})$. 
Then the proof above still holds if we replace $2^n$ with the number of vectors in the vector space - 
which is $2^{mn}$ if our domain is
$m$ dimensional vectors of $n$-bit numbers. 
Because there was no dependence on the size of the domain, 
the same error bounds hold; therefore we can get the same bound that if $P$ is correct on $\le 3/4$ of the inputs, the program will return FAIL with high probability. 

We therefore have the following theorem given a linear homomorphism $f$ on a $m$-dimensional vector space $V$ with the values of $b_i=f({\bf e_i})$ known. We assume vectors in $V$ have coordinates that are $n$-bit integers. Therefore $\m{V}=2^{mn}$. 

\begin{theorem}\label{thm:linhomtestingalg}
There exists a self-testing algorithm for the linear homomorphism $f$ over the domain $V$ with $O(nm/\epsilon)$ additional computation time and $O(1/\epsilon)$ queries to the program $P$.
\end{theorem}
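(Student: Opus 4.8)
## Proof Proposal for Theorem~\ref{thm:linhomtestingalg}

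The plan is to mimic the structure of the univariate argument in Section~\ref{section:tester} essentially verbatim, replacing scalars by vectors and the modulus $2^n$ by the vector-space size $|V| = 2^{mn}$. First I would make the multivariate algorithm precise: the analogue of Algorithm~\ref{algo:RandSplit} (call it {\sc VectorRandSplit}) takes an input vector $\mathbf{x}$, the known values $b_1,\dots,b_m$, and the program $P$; it picks a uniformly random $\mathbf{y} \in V$ component-wise (each coordinate uniform in $[0,2^n)$, tracking a carry bit $\delta_i$ per coordinate as in the original), sets $\mathbf{x}' = \mathbf{y} + (\mathbf{x} - \mathbf{y})$ with wraparound, and checks that $P(\mathbf{y}) + P(\mathbf{x} - \mathbf{y} \bmod 2^n\text{ per coord}) = P(\mathbf{x}) + \sum_i b_i \delta_i 2^n$. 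The wraparound correction $\sum_i b_i \delta_i 2^n$ is computable in $O(nm)$ time by $m$ shift-and-add operations, which is what gives the claimed $O(nm/\epsilon)$ additional computation time. The analogue of Algorithm~\ref{algo:Test} first runs, $k_1$ times, the ``summing to $\mathbf{0}$'' check — pick random $\mathbf{x}$, verify $P(\mathbf{x}) + P((-\mathbf{x}) \bmod) = f(\mathbf{0}\text{-wrap}) = \sum_i b_i \delta_i 2^n$ — and then runs {\sc VectorRandSplit} $k_2$ times on random input vectors.

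Next I would port the analysis. Define the discrepancy $d(\mathbf{x}) \doteq P(\mathbf{x}) - \sum_i b_i x_i$, so $d$ vanishes exactly where $P$ is correct; let $\epsilon_0 \ge \epsilon$ be the true error fraction over $V$. The key observations used in Lemma~\ref{lemma:main} are purely combinatorial and transfer directly: (i) the first loop pairs $\mathbf{x}$ with its additive inverse and catches an error unless the fraction of vectors with $d(\mathbf{x}) > 0$ and the fraction with $d(\mathbf{x}) < 0$ are each within $\beta$ of $\epsilon_0/2$; (ii) for a fixed $\mathbf{x}$, a uniformly random split sends the first part $\mathbf{y}$ to a uniformly random vector, so the expected number of ``opposite-sign matches'' is $\epsilon_1\epsilon_2|V| \le (\epsilon_0/2)^2|V|$, where $\epsilon_1,\epsilon_2$ are the positive/negative discrepancy fractions; (iii) by Markov a random $\mathbf{x}$ is ``good'' except with probability $\alpha$, and for a good $\mathbf{x}$ the split catches an error with probability $\ge \epsilon_0/2 - \beta - \frac{1}{\alpha}(\epsilon_0/2)^2$, by exactly the sign-counting argument in Proposition~\ref{prop:gooderr}. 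The only place the domain enters is as a uniform sample space of size $|V|$, and since none of the bounds depend on that size, the same choices $\beta = \epsilon/4$, $\alpha = 2/3$, $k_1, k_2 = O(1/\epsilon)$ yield failure probability $\ge 3/4$. For completeness I would note soundness: if $P = f$ everywhere then every check passes.

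The main obstacle — really the only point needing care rather than transcription — is verifying that the component-wise random split still makes the first part of the split a uniformly random vector of $V$ with the correct wraparound bookkeeping, and that the multivariate ``sum to $\mathbf{0}$'' check is the right analogue of the ``sum to $2^n$'' check. Concretely I must check that when $\mathbf{x}$ is fixed and each coordinate $y_i$ is drawn uniformly from $[0,2^n)$, the pair $(\mathbf{y}, \mathbf{x} - \mathbf{y} \bmod 2^n)$ ranges uniformly over all $2^{mn}$ ordered pairs summing (with carries) to $\mathbf{x}$, so that $\mathbf{y} \mapsto \mathbf{x}-\mathbf{y}$ is a bijection of $V$ and the ``probability $1/|V|$ of hitting any given pair'' statement holds coordinate-wise and hence globally by independence. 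Once that is in place, the propositions of Lemma~\ref{lemma:main} apply word for word with $2^n$ replaced by $2^{mn}$, and the runtime bookkeeping ($O(nm)$ per wraparound correction, $O(1/\epsilon)$ iterations) gives the stated $O(nm/\epsilon)$ additional computation time and $O(1/\epsilon)$ queries. I would therefore present the proof as: (1) state the modified algorithm; (2) observe the coordinate-wise split is a uniform bijection; (3) invoke Lemma~\ref{lemma:main} mutatis mutandis; (4) tally the runtime.
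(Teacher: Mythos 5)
Your proposal is correct and follows essentially the same route as the paper: the paper likewise defines the component-wise {\sc RandSplitTwo}, runs a first loop pairing $\mathbf{x}$ with $\mathbf{a}-\mathbf{x}$ for $\mathbf{a}=\langle 2^n,\dots,2^n\rangle$ (equivalent to your ``sum to $\mathbf{0}$ with wraparound'' check), and then argues that Lemma~\ref{lemma:main} applies verbatim because none of its bounds depend on the domain size, with the same $k_1,k_2=O(1/\epsilon)$ and $O(nm)$ per-iteration cost giving $O(nm/\epsilon)$ additional computation time. Your added care about the coordinate-wise split being a uniform bijection on $V$ is exactly the point the paper leaves implicit, so nothing in your argument diverges from theirs.
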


We redefine the algorithms for this case. 

\IncMargin{1em}
\begin{algorithm}[H]

\SetKwFunction{Rand}{Rand}\SetKwFunction{RandSplit2}{RandSplit2}
\SetKwInOut{Input}{input}\SetKwInOut{Output}{output}
\Input{An $m$-dimensional vector of $n$-bit numbers ${\bf x}=\langle x_1,\dots,x_m \rangle \in V$, the values $b_1,\dots, b_m$, a program $P$ for computing the function $f$}
\Output{FAIL or PASS}
\For{$i \leftarrow 1$ \KwTo $m$}{
$y_i \leftarrow $\Rand(0,$2^n$)\;
\lIf{$y_i < x_i$}{ $\delta_i \leftarrow 0$}
\lElse{ $\delta_i \leftarrow 1$}
$z_i \leftarrow \delta_i \cdot 2^n + x_i - y_i$\;
}
${\bf y} \leftarrow \langle y_1,\dots, y_m \rangle$\;
${\bf z} \leftarrow \langle z_1,\dots, z_m \rangle$\;
\If{$P({\bf y}) + P({\bf z}) \ne \delta_1 \cdot 2^n \cdot b_1 + \dots + \delta_m \cdot 2^n \cdot b_m + P({\bf x})$}{
\Return FAIL\;
}
\Return PASS\;
\caption{{\sc RandSplitTwo}}
\label{algo:RandSplitMulti}
\end{algorithm}\DecMargin{1em}

We now use Algorithm~\ref{algo:RandSplitMulti} to create a self-testing algorithm for this linear homomorphism just as we did with the univariate case. Let ${\bf a} = \langle 2^n, \dots, 2^n \rangle$. Then we know that $f(a)=2^n \cdot (b_1 + \dots + b_m)$.

\IncMargin{1em}
\begin{algorithm}[H]
\SetKwFunction{Rand}{Rand}\SetKwFunction{RandSplitTwo}{RandSplitTwo}
\SetKwInOut{Input}{input}\SetKwInOut{Output}{output}
\Input{A program $P$ for computing the function $f$ on a domain $V$ and the values $b_1,\dots,b_m$}
\Output{FAIL or PASS}

\For{$i\leftarrow 1$ \KwTo $k_1$}{
${\bf x} \leftarrow$ \Rand(V)\;
\If{$P({\bf x}) + P({\bf a}-{\bf x}) \ne 2^n \cdot (b_1 + \dots + b_m)$}{
\Return FAIL\;
}
}

\For{$i\leftarrow 1$ \KwTo $k_2$}{
$x \leftarrow$ \Rand(V)\;
\If{\RandSplitTwo$({\bf x},b_1,\dots, b_m,P)$=FAIL}{
\Return FAIL\;
}
}
\Return PASS\;
\caption{{\sc LinearHomomorphismTest}}
\label{algo:LinHomTest}
\end{algorithm}\DecMargin{1em}

\begin{lemma}\label{lemma:linhomalg}
If $P$ is correct on $\le1-\epsilon$ fraction of inputs in $V$, then Algorithm~\ref{algo:LinHomTest} returns FAIL with probability $\ge 3/4$. 
\end{lemma}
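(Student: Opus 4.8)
\textbf{Proof proposal for Lemma~\ref{lemma:linhomalg}.}

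The plan is to run the proof of Lemma~\ref{lemma:main} essentially verbatim, with integers replaced by vectors of $V$, the number $2^n$ replaced by $\m{V}=2^{mn}$, and the ``wraparound at $2^n$'' replaced by the coordinatewise wraparound encoded in ${\bf a}=\langle 2^n,\dots,2^n\rangle$. I would redefine the discrepancy $d({\bf x}) \dot{=} P({\bf x}) - f({\bf x})$, where $f({\bf x}) = \sum_{i=1}^m b_i x_i$; since the $b_i$ and the coordinates are integers, $d({\bf x})$ is an integer, so ``positive'' and ``negative'' discrepancy remain meaningful. Let $\epsilon_0 \ge \epsilon$ be the true fraction of ${\bf x} \in V$ with $d({\bf x}) \ne 0$; we must show Algorithm~\ref{algo:LinHomTest} FAILs with probability $\ge 3/4$.

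For the first loop of Algorithm~\ref{algo:LinHomTest}: because $x_i \mapsto 2^n - x_i$ is a bijection on the $n$-bit coordinates, ${\bf x} \mapsto {\bf a} - {\bf x}$ is an involution of $V$, and by linearity $f({\bf a}) = 2^n(b_1+\dots+b_m)$, so the test $P({\bf x}) + P({\bf a}-{\bf x}) = 2^n(b_1+\dots+b_m)$ fails exactly when $d({\bf x}) + d({\bf a}-{\bf x}) \ne 0$. The counting argument of the first Proposition in the proof of Lemma~\ref{lemma:main} then applies unchanged (it only used that the pairing is an involution and that $1-\epsilon_0$ of the inputs have zero discrepancy): if fewer than an $\epsilon_0/2-\beta$ fraction of ${\bf x}$ have $d({\bf x}) > 0$ (resp.\ $<0$), then one iteration of the loop FAILs with probability $\ge 2\beta$. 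Hence, conditioned on the first loop not already failing, we may assume both the positive- and negative-discrepancy fractions lie strictly between $\epsilon_0/2-\beta$ and $\epsilon_0/2+\beta$.

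For the second loop I would carry over the ``opposite-sign matches'' analysis. Set $\delta_{\bf x}({\bf y}) = 1$ iff $d({\bf y}) > 0$ and $d(({\bf x}-{\bf y}) \bmod {\bf a}) < 0$, and let the number of opposite-sign matches be $\sum_{{\bf y} \in V} \delta_{\bf x}({\bf y})$. Since $({\bf x}-{\bf y}) \bmod {\bf a}$ is uniform on $V$ when ${\bf x}$ is (again coordinatewise), the expected number of matches over a random ${\bf x}$ is $\epsilon_1 \epsilon_2 \m{V} \le (\epsilon_0/2)^2 \m{V}$; call ${\bf x}$ \emph{good} if its count of matches is $\le \frac{1}{\alpha}(\epsilon_0/2)^2 \m{V}$, so by Markov a bad ${\bf x}$ is drawn with probability $\le \alpha$. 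The identity $f({\bf x}) = f({\bf y}) + f({\bf z}) - \sum_i \delta_i 2^n b_i$ whenever $z_i = \delta_i 2^n + x_i - y_i$ --- just linearity applied in each coordinate --- shows that RandSplitTwo FAILs exactly when $d({\bf y}) + d({\bf z}) \ne d({\bf x})$. Thus, exactly as in Proposition~\ref{prop:gooderr}, on a good ${\bf x}$ one call to RandSplitTwo FAILs with probability $\ge \epsilon_0/2 - \beta - \frac{1}{\alpha}(\epsilon_0/2)^2$.

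Finally I would combine these as in Section~\ref{sec:puttingittogether}: with $\alpha = 2/3$ and $\beta = \epsilon/4$, the second loop catches an error in each iteration with probability $\ge (1-\alpha)(\epsilon_0/2 - \beta - \frac{1}{\alpha}(\epsilon_0/2)^2)$, so a Chernoff bound gives $k_1,k_2 = O(1/\epsilon)$, each loop FAILs with probability $\ge 7/8$, and a union bound yields FAIL with probability $\ge 3/4$. I expect the only real work to be bookkeeping: verifying that every estimate in Lemma~\ref{lemma:main} was phrased in terms of fractions of the domain and used nothing about $\{0,\dots,2^n-1\}$ beyond (i)~${\bf x}\mapsto{\bf a}-{\bf x}$ and ${\bf x}\mapsto({\bf x}-{\bf y})\bmod{\bf a}$ being measure-preserving bijections of $V$ and (ii)~coordinatewise linearity --- both of which hold with $\m{V}=2^{mn}$ in place of $2^n$, so there is no hidden dependence on $m$ or $n$ in the $O(1/\epsilon)$ query bound. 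The minor boundary subtlety that a coordinate of ${\bf a}-{\bf x}$ or of ${\bf z}$ may equal $2^n$ rather than be a genuine $n$-bit number is handled by the $\delta_i$ flags and the linear-time shifts computing the $2^n b_i$, exactly as in the univariate algorithm; the $m$ shifts per iteration give additional computation time $O(nm)$ per round, hence $O(nm/\epsilon)$ overall.
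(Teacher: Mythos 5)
Your proposal is correct and takes essentially the same approach as the paper: the paper's own proof of Lemma~\ref{lemma:linhomalg} is just a one-line observation that the argument of Lemma~\ref{lemma:main} applies verbatim with the larger domain $V$ in place of $D_n$, since none of the error bounds depended on the domain size. What you have written is simply that reduction carried out explicitly, checking the measure-preserving pairings and coordinatewise linearity that make it go through.
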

\begin{proof}
We note that this is exactly equivalent to the case of Lemma~\ref{lemma:main} and Algorithm~\ref{algo:Test} only with a larger domain. Since the error bounds did not depend on the size of the domain, we see that by Lemma~\ref{lemma:main} we again get that Algorithm~\ref{algo:LinHomTest} returns FAIL with probability $\ge 3/4$.
\end{proof}

Now that we have the algorithm we can show that it is our desired self-testing algorithm.

\begin{proof}[Proof of Theorem~\ref{thm:linhomtestingalg}]
We see easily that if $P$ is correct on all inputs in $V$ that Algorithm~\ref{algo:LinHomTest} always outputs PASS. Then by Lemma~\ref{lemma:linhomalg}, we see that if $P$ is $\epsilon$-far from $f$ that  Algorithm~\ref{algo:LinHomTest} outputs FAIL with probability $\ge 3/4$.

By the same Chernoff bounds from Section~\ref{sec:puttingittogether} we see that we can set $k_1,k_2$ so that $k_1+k_2=O(1/\epsilon)$. The additional computation time depends on how long it takes to compute $f({\bf a})$ in general. The computation time necessary is $O(m)$ shifts and additions. 

 If we assume these shifts and additions take $O(nm)$ time, then the additional computation time is $O(nm/\epsilon)$ as desired. Therefore $P$ is a valid self-testing algorithm for the function $f$ as desired.
\end{proof}

If we let $k_1=96, k_2\approx 709$ as in Section~\ref{sec:puttingittogether}, then since the error bounds are the same as in that section, the Chernoff bounds from the appendix will again give us that each of the two parts will return error with probability 7/8, so we will output FAIL with probability $3/4$ as desired.

\paragraph{Acknowledgements} The authors would like to greatly thank the referees for their comments. The final publication is available at Springer via \url{http://dx.doi.org/10.1007/s00224-015-9639-z}.

\bibliographystyle{plain}
\bibliography{uropproofs}

\section*{Appendices}

\begin{appendix}

\section{Checking Algorithm}

We now describe an algorithm that checks whether the program is correct
when multiplying two $n$-bit numbers $a,b$, rather than if the program is correct in general, in order to prove Theorem~\ref{thm:checker}. The function RandSplit is defined as before:

\IncMargin{1em}
\begin{algorithm}[H]
\SetKwFunction{Rand}{Rand}\SetKwFunction{RandSplit}{RandSplit}\SetKwFunction{Test}{Test}
\SetKwInOut{Input}{input}\SetKwInOut{Output}{output}
\Input{An $n$-bit number $a$, a program $P_b$ computing multiplication by $b$}
\Output{FAIL or PASS}

\lIf{\Test$(P_b)$ = FAIL} \Return FAIL\;

\Return \RandSplit$(a,P_b)$ \label{secondtest}
\caption{{\sc Checker}}
\label{algo:Checker}
\end{algorithm}\DecMargin{1em}

\begin{theorem}Algorithm~\ref{algo:Checker} correctly checks $a \cdot b$ for $a \in D_n$ and outputs FAIL if $P_b$ is incorrect on input $a$ with probability $\ge 3/4$. \end{theorem}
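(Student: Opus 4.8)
The plan is to prove completeness and soundness of \textsc{Checker} (Algorithm~\ref{algo:Checker}) separately, reducing soundness to a two-case analysis on the global error rate $\epsilon_0$ of $P_b$ — the fraction of $x\in D_n$ with $d(x) = P_b(x) - bx \neq 0$ — so as to reuse the machinery already built for Lemma~\ref{lemma:main}. For completeness: if $P_b = f_b$ on all of $D_n$ then every discrepancy is $0$, so neither the internal call to \textsc{Test} nor the call to \textsc{RandSplit}$(a,P_b)$ can return FAIL, and \textsc{Checker} outputs PASS with probability $1$. For the quantitative soundness bound I would have \textsc{Checker} run \textsc{Test} with the parameter $\epsilon = 1/8$ (using the constants $k_1 = 96$, $k_2 \approx 709$ fixed in Section~\ref{sec:puttingittogether}) followed by a single call to \textsc{RandSplit}$(a,P_b)$, and assume $P_b(a) \neq f_b(a)$, i.e.\ $d(a)\neq 0$.

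First case: $\epsilon_0 \geq 1/8$. Then $P_b$ is correct on at most a $7/8 = 1-\epsilon$ fraction of $D_n$, so Lemma~\ref{lemma:main} says the \textsc{Test} call returns FAIL with probability $\geq 3/4$, hence so does \textsc{Checker}. Second case: $\epsilon_0 < 1/8$. Here \textsc{Test} may pass, and the work is done by \textsc{RandSplit}$(a,P_b)$. Unwinding its check exactly as in the proof of Lemma~\ref{lemma:main} — write $P_b(y) = by + d(y)$ and use $x_1 + x_2 = \delta\cdot 2^n + a$ — shows that the inequality it tests is equivalent to $d(x_1) + d(x_2) \neq d(a)$. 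Since $x_1$ is uniform on $D_n$ and $x_2$ runs (bijectively with $x_1$) over $2^n$ values, a union bound gives $d(x_1) = 0$ and $d(x_2) = 0$ simultaneously with probability $\geq 1 - 2\epsilon_0 > 3/4$; in that event $d(x_1) + d(x_2) = 0 \neq d(a)$, so \textsc{RandSplit} returns FAIL.

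To finish I would combine the cases using the independence of the coins of \textsc{Test} and \textsc{RandSplit}:
\[
\Pr[\textsc{Checker}\text{ FAILs}] \;=\; \Pr[\textsc{Test}\text{ FAILs}] \;+\; \Pr[\textsc{Test}\text{ PASSes}]\cdot\Pr[\textsc{RandSplit}\text{ FAILs}].
\]
In the first case the first term alone is $\geq 3/4$; in the second, since $\Pr[\textsc{RandSplit}\text{ FAILs}] \geq 1 - 2\epsilon_0$ and this quantity is at most $1$, the right-hand side is $\geq (1-2\epsilon_0)\bigl(\Pr[\textsc{Test}\text{ FAILs}] + \Pr[\textsc{Test}\text{ PASSes}]\bigr) = 1 - 2\epsilon_0 > 3/4$. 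So in either case \textsc{Checker} outputs FAIL with probability $\geq 3/4$. The resource bounds are immediate: one run of \textsc{Test} at $\epsilon = 1/8$ plus one \textsc{RandSplit} call makes $O(1)$ queries and spends $O(n)$ additional time (the only non-trivial operation being the shift computing $b\cdot 2^n$); the $\epsilon$-dependence in Theorem~\ref{thm:checker} then comes, as noted in the footnote following the definition of a checker, from running \textsc{Checker} $O(1/\epsilon)$ times and taking the majority vote.

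The only step needing real care is the off-by-one behavior of \textsc{RandSplit}: with $x_2 = \delta\cdot 2^n + a - x_1$, the value $x_2 = 2^n$ — an $(n+1)$-bit number lying outside $D_n$ — occurs when $x_1 = a$, in place of the value $0$, so the set of $x_2$'s differs from $D_n$ in a single point (and on that one outcome the check reduces to $P_b(2^n)\neq b\cdot 2^n$, which need not detect the error at $a$). One can either absorb this into the bound — it perturbs $1 - 2\epsilon_0$ by at most $2^{-n}$, which is immaterial once $n$ is not tiny (or just shrink the \textsc{Test} parameter slightly below $1/8$) — or, more cleanly, work with $x_2 \bmod 2^n$ throughout, exactly as the $\delta_x$ definition in Section~\ref{section:tester} already does, so that $x_2$ is genuinely uniform on $D_n$ and the $1 - 2\epsilon_0$ union bound is exact. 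Beyond that, the argument is a direct specialization of Lemma~\ref{lemma:main} and introduces no new ideas.
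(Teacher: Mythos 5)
Your proposal is correct and follows essentially the same route as the paper's own proof: the same two-case split (if $P_b$ is $1/8$-far from $f_b$, the internal \textsc{Test} call fails with probability $\ge 3/4$; if it is $1/8$-close, a union bound shows \textsc{RandSplit} at $a$ detects $P_b(a)\ne f_b(a)$ with probability $\ge 3/4$). Your extra attention to the corner case $x_2=2^n$ falling just outside $D_n$ is a refinement the paper silently glosses over, not a different argument.
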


\begin{proof}
It is clear that if the program always answers correctly on the domain $D_n$ then the
checker will output CORRECT.
There are now two cases where $P_b(a)$ is incorrect: the first is $P_b$ is $1/8$-far from $f_b(x)=bx$ and the second is when it is $1/8$-close. If it is $1/8$-far from $f_b$, then we know from Theorem~\ref{thm:testingalg} that when we call Algorithm~\ref{algo:Test} it will return FAIL with probability $\ge 3/4$. If instead $P_b$ is $1/8$-close to $f_b$, then the second test - RandSplit on $a$ in Line~\ref{secondtest} - will be checking $P_b(x)+P_b(a-x)=P_b(a)$ for some $x$. Since $P_b$ is $1/8$-close to $f_b$, we see that the probability that $P_b(x) \ne f_b(x)$ is $\le 1/8$, and the same for the probability that $P_b(a-x) \ne f_b(a-x)$. Therefore by the union bound the probability that $P_b(x) + P_b(a-x) = f_b(x) + f_b(a-x) = f_b(a) \ne P_b(a)$ is $\ge 3/4$, so the second test will return FAIL with probability $\ge 3/4$, as desired. 

We also note that the running time and queries of this algorithm are on the same order as those of Algorithm~\ref{algo:Test}, or $O(n/\epsilon)$ additional computation time and $O(1/\epsilon)$ queries. 
\end{proof}

\section{Chernoff Bounds}

We use Chernoff bounds often to describe the error probabilities of our algorithms. We use the following bound specifically, where $X_1,\dots,X_n$ are random Bernoulli variables with expectation $p$ and $X=\sum X_i$:

\begin{equation*}
Pr[X < (1-\delta)np] \le e^{-\delta^2np/2}
\end{equation*}

where $0 < \delta < 1$. The theorem we use for our algorithms is the following:

\begin{theorem}\label{thm:chernoff}
If the probability that a test correctly detects an error in $P_b$ is $p$, by running the test $6/p=O(1/p)$ times we will detect the error with probability $7/8$.
\end{theorem}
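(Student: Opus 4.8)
The plan is to treat the $N \dot{=} \lceil 6/p\rceil$ independent repetitions of the test as Bernoulli trials and bound the probability that none of them catches the error. For $i=1,\dots,N$ let $X_i$ be the indicator of the event that the $i$-th run of the test detects the error, so that $\Pr[X_i = 1] \ge p$ (we only need a lower bound on this probability, which is why ``the probability is $p$'' may be weakened to ``at least $p$'' with no harm), and set $X = \sum_{i=1}^N X_i$. The repeated procedure reports FAIL unless $X = 0$, so it suffices to prove $\Pr[X = 0] \le 1/8$.

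First I would note that the $N$ runs use fresh, mutually independent coin tosses, so the $X_i$ are independent and $X$ has expectation $\mu = Np \ge 6$. Then I would apply the Chernoff bound recorded above, $\Pr[X < (1-\delta)\mu] \le e^{-\delta^2\mu/2}$ for $0<\delta<1$, with the choice $\delta = 5/6$: this makes $(1-\delta)\mu = \mu/6 \ge 1$, so the event $\{X=0\}$ is contained in $\{X < (1-\delta)\mu\}$, and hence $\Pr[X=0] \le e^{-(5/6)^2\mu/2} \le e^{-(25/36)\cdot 6/2} = e^{-25/12}$. The last step is the numerical check that $e^{-25/12} \le 1/8$, equivalently $25/12 \ge \ln 8 = 3\ln 2 \approx 2.079$, which holds since $25/12 \approx 2.083$. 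Therefore $\Pr[X \ge 1] \ge 7/8$, and running the test $6/p = O(1/p)$ times detects the error with probability at least $7/8$.

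I do not expect a genuine obstacle; the statement is essentially one application of the stated bound. The only things to be careful about are: (i) recording that detection probability ``at least $p$'' suffices, since $\Pr[X=0]$ is monotone non-increasing in each individual success probability; (ii) handling the non-integrality of $6/p$ by taking $N = \lceil 6/p\rceil$, which only increases $\mu$ and hence only helps; and (iii) verifying the somewhat tight inequality $e^{-25/12}\le 1/8$. As a remark I would also note the even more elementary route that avoids the Chernoff bound altogether, $\Pr[X=0] = (1-p)^N \le e^{-pN} \le e^{-6} < 1/8$; I would nonetheless present the Chernoff version to stay consistent with the bound set up in this appendix.
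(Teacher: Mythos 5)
Your proof is correct and follows essentially the same route as the paper: the same Chernoff bound with the same effective choice of $\delta$ (the paper takes $\delta=(np-1)/np$, which equals your $5/6$ at $np=6$) and the same tight numerical check ($25/6 \ge \ln 64$ in the paper is exactly your $25/12 \ge \ln 8$). Your added remarks about the ceiling, the ``at least $p$'' monotonicity, and the elementary $(1-p)^N \le e^{-6}$ alternative are harmless refinements, not a different argument.
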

\begin{proof}
Assume we run the test $n$ times. For the $i$th time we run the test, we let $X_i=1$ if an error is detected and $X_i=0$ otherwise. If we run the test $n$ times without detecting error, this means that $\sum X_i = X = 0 < 1$. By the Chernoff bound above, we see that by letting $\delta=(np-1)/np$ that $Pr[X < 1] \le e^{-(np-1)^2/(2np)}$. The probability that $X < 1$ is the same as the probability that we fail to detect an error, which we wish to be $\le 1/8$. Therefore we want $e^{-(np-1)^2/(2np)} \le 1/8$.

\begin{align*}
e^{-(np-1)^2/(2np)} &\le 1/8\\
-\frac{(np-1)^2}{2np} &\le \log 1/8\\
\frac{(np-1)^2}{np} &\ge \log 64 \\
\end{align*}

If $np=6$, then $\frac{(np-1)^2}{np}=25/6 > \log 64$, as desired. Therefore by letting $n=6/p$, the probability that $X<1$ is $\le 1/8$, so we detect an error with probability $7/8$. 

\end{proof}

\end{appendix}

\end{document}